\newtheorem{theorem}{Theorem}[section]
\newtheorem{corollary}[theorem]{Corollary}
\newtheorem{lemma}[theorem]{Lemma}
\newtheorem{proposition}[theorem]{Proposition}
\newtheorem{defin}[theorem]{Definition}
\newenvironment{definition}{\begin{defin}\normalfont\quad}{\end{defin}}
\newtheorem{examp}[theorem]{Example}
\newtheorem{rema}[theorem]{Remark}
\newtheorem{prob}[theorem]{Problem}
\numberwithin{equation}{section}
\newcommand{\bt}{\begin{thm}}
\newcommand{\et}{\end{thm}}
\newcommand{\bp}{\begin{proof}}
\newcommand{\ep}{\end{proof}}
\newcommand{\bprop}{\begin{prop}}
\newcommand{\eprop}{\end{prop}}
\newcommand{\bl}{\begin{lemma}}
\newcommand{\el}{\end{lemma}}
\newcommand{\bc}{\begin{corollary}}
\newcommand{\ec}{\end{corollary}}
\newcommand{\Z}{\mathbb{Z}}
\newcommand{\C}{\mathbb{C}}
\newcommand{\be}{\begin{enumerate}}
\newcommand{\ee}{\end{enumerate}}
\newcommand{\OMIT}[1]{}
\title{Unweighted linear congruences with distinct coordinates \\ and the Varshamov--Tenengolts codes}\author{Khodakhast Bibak \thanks{Department of Computer Science, University of Victoria, Victoria, BC, Canada V8W 3P6. Email: {\tt \{kbibak,bmkapron,srinivas\}@uvic.ca}} \and Bruce M. Kapron \footnotemark[1] \and Venkatesh Srinivasan \footnotemark[1]}
\begin{document}

\maketitle

\begin{abstract}
In this paper, we first give explicit formulas for the number of solutions of unweighted linear congruences with distinct coordinates. Our main tools are properties of Ramanujan sums and of the discrete Fourier transform of arithmetic functions. Then, as an application, we derive an explicit formula for the number of codewords in the Varshamov--Tenengolts code $VT_b(n)$ with Hamming weight $k$, that is, with exactly $k$ $1$'s. The Varshamov--Tenengolts codes are an important class of codes that are capable of correcting asymmetric errors on a $Z$-channel. As another application, we derive Ginzburg's formula for the number of codewords in $VT_b(n)$, that is, $|VT_b(n)|$. We even go further and discuss connections to several other combinatorial problems, some of which have appeared in seemingly unrelated contexts. This provides a general framework and gives new insight into all these problems which might lead to further work.
\end{abstract}

{\bf Keywords:} Linear congruence; distinct coordinates; Ramanujan sum; discrete Fourier transform; the Varshamov--Tenengolts code; Hamming weight; $Z$-channel
\vskip .3cm
{\bf 2010 Mathematics Subject Classification:} 68P30, 11D79, 11P83, 42A16

\section{Introduction}\label{Sec_1}

A \textit{$Z$-channel} (also called a \textit{binary asymmetric channel}) is a channel with binary input and binary output where a transmitted $0$ is always received correctly but a transmitted $1$ may be received as either $1$ or $0$. These channels have many applications, for example, some data storage systems and optical communication systems can be modelled using these channels. In 1965, Varshamov and Tenengolts \cite{VATE} introduced an important class of codes, known as the Varshamov--Tenengolts codes or VT-codes, that are capable of correcting asymmetric errors on a $Z$-channel (see also \cite{STYO, VAR}). Levenshtein \cite{LEV1, LEV2}, by giving an elegant decoding algorithm, showed that these codes could also be used for correcting a single deletion or insertion. Using the Varshamov--Tenengolts codes, Gevorkyan and Kabatiansky \cite{GEKA} constructed a class of binary codes of a specific length correcting single localized errors whose cardinality attains the ordinary Hamming bound.    

\begin{definition}
Let $n$ be a positive integer and $0\leq b\leq n$ be a fixed integer. The Varshamov--Tenengolts code $VT_b(n)$ is the set of all binary vectors $\langle y_1,\ldots,y_n \rangle$ such that 
$$
\sum_{i=1}^{n}iy_i \equiv b \pmod{n+1}.
$$ 
\end{definition}
For example, $VT_0(5)=\lbrace 00000, 10001, 01010, 11100, 00111, 11011 \rbrace$, where we have shown vectors as strings. So, $|VT_0(5)|=6$. The \textit{Hamming weight} of a string over an alphabet is defined as the number of non-zero symbols in the string. Equivalently, the Hamming weight of a string is the Hamming distance between that string and the all-zero string of the same length. For example, the Hamming weight of $01010$ is $2$, and the number of codewords in $VT_0(5)$ with Hamming weight $2$ is $2$.

Varshamov in his fundamental paper ``On an arithmetic function with an application in the theory of coding" (\cite{VAR2}) proved that the maximum number of codewords in the Varshamov--Tenengolts code $VT_b(n)$ is achieved when $b=0$, that is, $|VT_0(n)| \geq |VT_b(n)|$ for all $b$. Several natural questions arise: What is the number of codewords in the Varshamov--Tenengolts code $VT_b(n)$, that is, $|VT_b(n)|$? Given a positive integer $k$, what is the number of codewords in $VT_b(n)$ with Hamming weight $k$, that is, with exactly $k$ $1$'s? Ginzburg \cite{GIN} in 1967 considered the first question and proved an explicit formula for $|VT_b(n)|$. In this paper, we deal with both questions and obtain explicit formulas for them via a novel approach, namely, \textit{connecting the Varshamov--Tenengolts codes to linear congruences with distinct coordinates}. We even go further and show that the number of solutions of these congruences is related to several other combinatorial problems, some of which have appeared in seemingly unrelated contexts. (For example, as we will discuss in Section~\ref{Sec_4}, Razen, Seberry, and Wehrhahn \cite{RSW} considered two special cases of a function considered in this paper and gave an application in coding theory in finding the complete weight enumerator of a code generated by a circulant matrix.) This provides a general framework and gives new insight into all these problems which might lead to further work. Let us now describe these congruences.

Throughout the paper, we use $(a_1,\ldots,a_k)$ to denote the greatest common divisor (gcd) of the integers $a_1,\ldots,a_k$, and write $\langle a_1,\ldots,a_k\rangle$ for an ordered $k$-tuple of integers. Let $a_1,\ldots,a_k,b,n\in \Z$, $n\geq 1$. A linear congruence in $k$ unknowns $x_1,\ldots,x_k$ is of the form
\begin{align} \label{cong form}
a_1x_1+\cdots +a_kx_k\equiv b \pmod{n}.
\end{align}

By a solution of (\ref{cong form}), we mean an $\mathbf{x}=\langle x_1,\ldots,x_k \rangle \in \mathbb{Z}_n^k$ that satisfies (\ref{cong form}). The following result, proved by D. N. Lehmer \cite{LEH2}, gives the number of solutions of the above linear congruence:

\begin{proposition}\label{Prop: lin cong}
Let $a_1,\ldots,a_k,b,n\in \Z$, $n\geq 1$. The linear congruence $a_1x_1+\cdots +a_kx_k\equiv b \pmod{n}$ has a solution $\langle x_1,\ldots,x_k \rangle \in \Z_{n}^k$ if and only if $\ell \mid b$, where
$\ell=(a_1, \ldots, a_k, n)$. Furthermore, if this condition is satisfied, then there are $\ell n^{k-1}$ solutions.
\end{proposition}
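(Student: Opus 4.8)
The plan is to view the left-hand side of \eqref{cong form} as a group homomorphism and to reduce the whole statement to counting fibers. Define $\phi \colon \Z_n^k \to \Z_n$ by $\phi(\langle x_1,\ldots,x_k\rangle) = a_1 x_1 + \cdots + a_k x_k \bmod n$. This is a homomorphism of finite abelian groups, the congruence \eqref{cong form} is solvable precisely when $b \bmod n$ lies in $\operatorname{im}\phi$, and in that case the solution set is exactly one coset of $\ker\phi$, hence has $|\ker\phi| = n^k / |\operatorname{im}\phi|$ elements. So everything follows once $\operatorname{im}\phi$ is identified.

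The image of $\phi$ is the subgroup of $\Z_n$ generated by the residues of $a_1,\ldots,a_k$. Lifting to $\Z$, the subgroup of $\Z$ generated by $a_1,\ldots,a_k$ together with $n$ is, by the generalized B\'ezout identity, exactly $\ell\Z$ where $\ell = (a_1,\ldots,a_k,n)$; reducing modulo $n$ gives $\operatorname{im}\phi = \ell\Z/n\Z$, a cyclic group of order $n/\ell$ (recall $\ell \mid n$). Therefore \eqref{cong form} has a solution iff $b \bmod n \in \ell\Z/n\Z$, that is, iff $\ell \mid b$, and then the number of solutions is $n^k / (n/\ell) = \ell n^{k-1}$.

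The only step needing care is the identification $\operatorname{im}\phi = \ell\Z/n\Z$: one inclusion is immediate since each $a_i$ and $n$ are divisible by $\ell$, while for the reverse one writes $\ell = c_1 a_1 + \cdots + c_k a_k + c_0 n$ with integers $c_i$, so $\ell \equiv c_1 a_1 + \cdots + c_k a_k \pmod n$ already lies in the image. I expect this B\'ezout step to be the main (minor) obstacle; the rest is the standard first-isomorphism-theorem bookkeeping. A fully elementary alternative avoids homomorphisms: first settle $k=1$ directly ($a_1 x_1 \equiv b \pmod n$ has $(a_1,n)$ solutions if $(a_1,n)\mid b$ and none otherwise), then induct on $k$ by splitting over the $n$ choices of $x_k$ and applying the inductive hypothesis to $a_1 x_1 + \cdots + a_{k-1}x_{k-1} \equiv b - a_k x_k \pmod n$, using the associativity-type identity $\bigl((a_1,\ldots,a_{k-1},n),\,a_k,\,n\bigr) = (a_1,\ldots,a_k,n)$ and a short count of how many residues $b - a_k x_k$ are divisible by the relevant gcd.
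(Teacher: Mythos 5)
Your argument is correct and complete. One point of comparison worth noting: the paper does not prove Proposition~\ref{Prop: lin cong} at all --- it is quoted as a classical result of D.~N.~Lehmer and used as a black box --- so there is no in-paper proof to match against. Your route (viewing $\langle x_1,\ldots,x_k\rangle \mapsto a_1x_1+\cdots+a_kx_k \bmod n$ as a homomorphism $\phi:\Z_n^k\to\Z_n$, identifying $\operatorname{im}\phi=\ell\Z/n\Z$ via B\'ezout, and counting each nonempty fiber as a coset of $\ker\phi$ of size $n^k/(n/\ell)=\ell n^{k-1}$) is a clean, self-contained proof, and the key step you flagged --- the identification of the image, with the containment $\ell\Z/n\Z\subseteq\operatorname{im}\phi$ coming from writing $\ell=c_1a_1+\cdots+c_ka_k+c_0n$ --- is exactly right, including the observation that $\ell\mid n$ is what makes ``$b\bmod n\in\ell\Z/n\Z$'' equivalent to ``$\ell\mid b$''. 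Your elementary fallback (settle $k=1$, then induct over the $n$ choices of $x_k$, using $\bigl((a_1,\ldots,a_{k-1},n),a_k,n\bigr)=(a_1,\ldots,a_k,n)$ and counting the $x_k$ with $(a_1,\ldots,a_{k-1},n)\mid b-a_kx_k$) also goes through and is closer in spirit to the classical Lehmer-style counting proofs; the homomorphism version buys brevity and makes the count $\ell n^{k-1}$ transparent as a kernel size, while the inductive version avoids any appeal to quotient-group machinery.
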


Counting the number of solutions of the above congruence with some restrictions on the solutions is also a problem of great interest. As an important example, one can mention the restrictions $(x_i,n)=t_i$ ($1\leq i\leq k$), where $t_1,\ldots,t_k$ are given positive divisors of $n$. The number of solutions of the linear congruences with the above restrictions, which we called {\it restricted linear congruences} in \cite{BKSTT}, was first considered by Rademacher \cite{Rad1925} in 1925 and Brauer \cite{Bra1926} in 1926, in the special case of $a_i=t_i=1$ $(1\leq i \leq k)$. Since then, this problem has been studied, in several other special cases, in many papers (very recently, we studied it in its `most general case' in \cite{BKSTT}) and has found very interesting applications in number theory, combinatorics, geometry, physics, computer science, and cryptography; see \cite{BKS2, BKSTT, BKSTT2, JAWILL} for a detailed discussion about this problem and a comprehensive list of references. 

Another restriction of potential interest is imposing the condition that all $x_i$ are {\it distinct} modulo $n$. Unlike the first problem, there seems to be very little published on the second problem. Recently, Grynkiewicz et al. \cite{GPP}, using tools from additive combinatorics and group theory, proved necessary and sufficient conditions under which the linear congruence $a_1x_1+\cdots +a_kx_k\equiv b \pmod{n}$, where $a_1,\ldots,a_k,b,n$ ($n\geq 1$) are arbitrary integers, has a solution $\langle x_1,\ldots,x_k \rangle \in \Z_{n}^k$ with all $x_i$ distinct modulo $n$; see also \cite{ADP, GPP} for connections to zero-sum theory. So, it would be an interesting problem to give an explicit formula for the number of such solutions. Quite surprisingly, this problem was first considered, in a special case, by Sch\"{o}nemann \cite{SCH} almost two centuries ago(!) but his result seems to have been forgotten. Sch\"{o}nemann \cite{SCH} proved an explicit formula for the number of such solutions when $b=0$, $n=p$ a prime, and $\sum_{i=1}^k a_i \equiv 0 \pmod{p}$ but $\sum_{i \in I} a_i \not\equiv 0 \pmod{p}$ for all $I\varsubsetneq \lbrace 1, \ldots, k\rbrace$. Very recently, the authors \cite{BKS6} generalized Sch\"{o}nemann's theorem using Proposition~\ref{Prop: lin cong} and a result on graph enumeration recently obtained by Ardila et al. \cite{ACH}. Specifically, we obtained an explicit formula for the number of solutions of the linear congruence $a_1x_1+\cdots +a_kx_k\equiv b \pmod{n}$, with all $x_i$ distinct modulo $n$, when $(\sum_{i \in I} a_i, n)=1$ for all $I\varsubsetneq \lbrace 1, \ldots, k\rbrace$, where $a_1,\ldots,a_k,b,n$ $(n\geq 1)$ are arbitrary integers. Clearly, this result does not resolve the problem in its full generality; for example, it does not cover the important case of $a_i=1$ ($1\leq i\leq k$) and this is what we consider in this paper with an entirely different approach. Specifically, we give an explicit formula for the number $N_n(k,b)$ of such solutions when $a_i=1$ ($1\leq i\leq k$), and do the same when in addition all $x_i$ are \textit{positive} modulo $n$. 

Our main tools in this paper are properties of Ramanujan sums and of the discrete Fourier transform of arithmetic functions which are reviewed in the next section. In Section~\ref{Sec_3}, we derive the explicit formulas, and discuss applications to the Varshamov--Tenengolts codes. In Section~\ref{Sec_4}, we discuss connections to several other combinatorial contexts.

\section{Ramanujan sums and discrete Fourier transform} \label{Sec_2} 

Let $e(x)=\exp(2\pi ix)$ be the complex exponential with period 1. For integers $m$ and $n$ with $n \geq 1$ the quantity 
\begin{align}\label{def1}
c_n(m) = \mathlarger{\sum}_{\substack{j=1 \\ (j,n)=1}}^{n}
e\!\left(\frac{jm}{n}\right)
\end{align}
is called a {\it Ramanujan sum}. It is the sum of the $m$-th powers of the primitive $n$-th roots of unity, and is also denoted by $c(m,n)$ in the literature. From (\ref{def1}), it is clear that $c_n(-m) = c_n(m)$. Clearly, $c_n(0)=\varphi (n)$, where $\varphi (n)$ is {\it Euler's totient function}. Also, $c_n(1)=\mu (n)$, where $\mu (n)$ is the {\it M\"{o}bius function}. The following theorem, attributed to Kluyver~\cite{KLU}, gives an explicit formula for $c_n(m)$:

\begin{theorem} \label{thm:Ram Mob}
For integers $m$ and $n$, with $n \geq 1$,
\begin{align}\label{for:Ram Mob}
c_n(m) = \mathlarger{\sum}_{d\, \mid\, (m,n)} \mu
\!\left(\frac{n}{d}\right)d.
\end{align}
\end{theorem}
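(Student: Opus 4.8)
The plan is to derive Kluyver's formula \eqref{for:Ram Mob} from the divisor-sum identity
\[
\sum_{d\,\mid\,n} c_d(m)=\eta_n(m),\qquad\text{where}\quad \eta_n(m):=\sum_{j=1}^{n} e\!\left(\frac{jm}{n}\right),
\]
and then applying M\"obius inversion. (The hypothesis $n\ge 1$ guarantees that every divisor sum here is finite and that $c_d(m)$ from \eqref{def1} is defined for each $d\mid n$.)

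First I would record that $\eta_n(m)$ is a complete sum of the $m$-th powers of the $n$-th roots of unity: it equals $n$ when $n\mid m$, and equals $0$ when $n\nmid m$ (in that case $e(m/n)\ne 1$, so $\sum_{j=1}^{n} e(jm/n)=0$ by the geometric series formula). Next I would partition $\{1,\dots,n\}$ according to the value of $(j,n)$. Writing $(j,n)=n/d$ forces $d\mid n$, and the integers $j$ with $1\le j\le n$ and $(j,n)=n/d$ are exactly those of the form $j=(n/d)\,j'$ with $1\le j'\le d$ and $(j',d)=1$; for such $j$ one has $e(jm/n)=e(j'm/d)$. Grouping the terms of $\eta_n(m)$ by this classification therefore gives
\[
\eta_n(m)=\sum_{d\,\mid\,n}\ \sum_{\substack{j'=1\\ (j',d)=1}}^{d} e\!\left(\frac{j'm}{d}\right)=\sum_{d\,\mid\,n} c_d(m),
\]
which is the asserted identity.

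With $\sum_{d\mid n} c_d(m)=\eta_n(m)$ in hand, M\"obius inversion yields $c_n(m)=\sum_{d\mid n}\mu(n/d)\,\eta_d(m)$. Since $\eta_d(m)=d$ precisely when $d\mid m$ and vanishes otherwise, the only surviving terms are those with $d\mid n$ \emph{and} $d\mid m$, i.e.\ $d\mid (m,n)$, each contributing $\mu(n/d)\,d$; this is exactly \eqref{for:Ram Mob}.

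The step requiring genuine care is the reindexing in the middle paragraph: I must check that $j\mapsto (d,j')$, with $d=n/(j,n)$ and $j'=j\,d/n$, is a bijection from $\{1,\dots,n\}$ onto the disjoint union $\bigsqcup_{d\mid n}\{\,j':1\le j'\le d,\ (j',d)=1\,\}$, and that the exponential factor transforms as claimed. Everything else is formal. An alternative would be to verify that both sides of \eqref{for:Ram Mob} are multiplicative in $n$ for fixed $m$ and compare them on prime powers, but the inversion argument above is shorter and avoids that check.
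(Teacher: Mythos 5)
Your proof is correct and complete. Note that the paper itself gives no proof of Theorem~\ref{thm:Ram Mob}: it attributes the formula to Kluyver \cite{KLU} and then uses the theorem, via M\"obius inversion, to deduce the divisor-sum identity (\ref{Orth1 for cons}). Your argument runs that derivation in the opposite direction: you first establish $\sum_{d\mid n} c_d(m)=\eta_n(m)$ directly from the definition (\ref{def1}) by partitioning $\{1,\dots,n\}$ according to the value of $(j,n)$ --- the reindexing $j=(n/d)\,j'$ with $1\le j'\le d$, $(j',d)=1$ is exactly the right bijection, and the evaluation of the full sum $\eta_d(m)$ ($=d$ if $d\mid m$, $=0$ otherwise, by the geometric series) is standard --- and then you invert to isolate $c_n(m)$, with only the terms $d\mid (m,n)$ surviving. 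There is no circularity: you never invoke (\ref{Orth1 for cons}) as the paper states it, but prove the corresponding identity independently from the definition. This route is self-contained and yields (\ref{Orth1 for cons}) as a byproduct rather than as a consequence of the theorem; the alternative you sketch (multiplicativity in $n$ plus a prime-power check) would also work but requires the extra verification that $c_n(m)$ is multiplicative in $n$, so the inversion argument is indeed the shorter one.
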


By applying the M\"{o}bius inversion formula, Theorem~\ref{thm:Ram Mob} yields the following property: For $m,n\geq 1$,
\begin{align} \label{Orth1 for cons}
\sum_{d\, \mid\, n} c_{d}(m)&=
  \begin{cases}
    n, & \text{if $n\mid m$},\\
    0, & \text{if $n\nmid m$}.
  \end{cases}
\end{align}

The {\it von Sterneck number} (\cite{von}) is defined by 

\begin{align}\label{def3}
\Phi(m,n)=\frac{\varphi (n)}{\varphi \left(\frac{n}{\left(m,n\right)}\right)}\mu \!\left(\frac{n}{\left(m,n\right)} \right).
\end{align}

A crucial fact in studying Ramanujan sums and their applications is that they coincide with the von Sterneck number. This result is attributed to Kluyver~\cite{KLU}:

\begin{theorem} \label{thm:von rama}
For integers $m$ and $n$, with $n \geq 1$, we have
\begin{align}\label{von rama for}
\Phi(m,n)=c_n(m).
\end{align}
\end{theorem}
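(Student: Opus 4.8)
The plan is to prove that the von Sterneck number $\Phi(m,n)$ coincides with the Ramanujan sum $c_n(m)$ by showing that both sides of \eqref{von rama for} are multiplicative in $n$ (for fixed $m$) and then checking equality on prime powers. Concretely, I would first observe that Theorem~\ref{thm:Ram Mob} already gives $c_n(m)=\sum_{d\mid (m,n)}\mu(n/d)\,d$, so it suffices to evaluate this sum and compare it with the closed form \eqref{def3}. The key arithmetic input is that the arithmetic function $n\mapsto \sum_{d\mid (m,n)}\mu(n/d)\,d$ is multiplicative, which follows from the multiplicativity of $\mu$, of the identity function, and of the gcd operation $n \mapsto (m,n)$ when $m$ is held fixed (one writes $n = n_1 n_2$ with $(n_1,n_2)=1$ and factors the divisor sum accordingly). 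Likewise, $\Phi(m,n)$ is manifestly a quotient/product of multiplicative functions of $n$, hence multiplicative.

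Having reduced to prime powers, I would fix a prime $p$ and compute both sides at $n=p^a$. Write $p^s \parallel m$ (with $s=\infty$ allowed if $p^a\mid m$, i.e.\ $s\ge a$). On the Ramanujan side, $(m,p^a)=p^{\min(s,a)}$, and because $\mu(p^a/d)$ vanishes unless $p^a/d\in\{1,p\}$, only the divisors $d=p^a$ (if $\min(s,a)=a$) and $d=p^{a-1}$ survive; this yields $c_{p^a}(m)=p^a-p^{a-1}$ when $a\le s$, $c_{p^a}(m)=-p^{a-1}$ when $a=s+1$, and $0$ when $a\ge s+2$. On the von Sterneck side, $\varphi(p^a)=p^a-p^{a-1}$, and $p^a/(m,p^a)=p^{a-\min(s,a)}$, so $\varphi(p^a/(m,p^a))$ and $\mu(p^a/(m,p^a))$ are computed in the same three cases, giving $\Phi(p^a,m)$ equal to $p^a-p^{a-1}$, $-p^{a-1}$, and $0$ respectively. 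The two computations agree case by case, so equality holds on prime powers, and multiplicativity extends it to all $n\ge 1$.

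I expect the only delicate point to be the bookkeeping around the case $p\mid m$ versus $p\nmid m$ and the boundary case $a=s+1$; one must be careful that $\varphi(1)=1$, $\mu(1)=1$, and $\mu(p)=-1$ are applied in the right branch, and that the degenerate case $a=1$, $s=0$ (so $(m,p)=1$, giving $\Phi=\mu(p)\varphi(p)/\varphi(p)=\mu(p)=-1=c_p(m)$) is consistent with the general formula. An alternative, more conceptual route would avoid Theorem~\ref{thm:Ram Mob} entirely: show directly that $\Phi(m,n)$ satisfies the divisor-sum identity $\sum_{d\mid n}\Phi(m,d)=n\cdot[n\mid m]$ (the same identity \eqref{Orth1 for cons} that characterizes $c_n(m)$), and then invoke uniqueness of the solution to this Möbius-type recursion; but the prime-power verification above is the shortest self-contained argument, so that is the one I would write out.
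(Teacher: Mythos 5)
Your argument is correct, but note that the paper does not actually prove this statement at all: it is quoted as a classical result attributed to Kluyver, so you are supplying a proof the paper omits rather than paralleling one. Your route is the standard one and it checks out: granting Theorem~\ref{thm:Ram Mob} (which the paper states earlier, so you may use it), both $n\mapsto\sum_{d\mid(m,n)}\mu(n/d)\,d$ and $n\mapsto\Phi(m,n)$ are multiplicative in $n$ for fixed $m$ --- the key facts being $(m,n_1n_2)=(m,n_1)(m,n_2)$ for coprime $n_1,n_2$ and the coprimality of the resulting cofactors, so that $\mu$ and $\varphi$ factor --- and your three prime-power cases ($a\le s$ giving $p^a-p^{a-1}$, $a=s+1$ giving $-p^{a-1}$, $a\ge s+2$ giving $0$, with $s=\infty$ covering $m=0$) agree on both sides; the only blemishes are the typo $\Phi(p^a,m)$ for $\Phi(m,p^a)$ and the slightly loose phrasing that $d=p^{a-1}$ ``survives'' unconditionally (it requires $\min(s,a)\ge a-1$, which your case values nonetheless reflect correctly). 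Your sketched alternative --- showing $\sum_{d\mid n}\Phi(m,d)=n$ or $0$ according as $n\mid m$ or not, then invoking uniqueness of the solution of this recursion --- would also work and has the conceptual advantage of bypassing Theorem~\ref{thm:Ram Mob}, but to be self-contained it would require establishing \eqref{Orth1 for cons} directly from the exponential-sum definition \eqref{def1} (by grouping residues $j$ modulo $n$ according to $(j,n)$), since the paper derives \eqref{Orth1 for cons} from Theorem~\ref{thm:Ram Mob} by M\"obius inversion; your choice to write out the prime-power verification is therefore the shorter and cleaner option.
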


A function $f:\Z \to \C$ is called {\it periodic} with period $n$ (also called {\it $n$-periodic} or {\it periodic} modulo $n$) if $f(m + n) = f(m)$, for every $m\in \mathbb{Z}$. In this case $f$ is determined by the finite vector $(f(1),\ldots,f(n))$. From (\ref{def1}) it is clear that $c_n(m)$ is a periodic function of $m$ with period $n$.

We define the {\it discrete Fourier transform} (DFT) of an $n$-periodic function $f$ as the function 
$\widehat{f}={\cal F}(f)$, given by
\begin{align}\label{FFT1}
\widehat{f}(b)=\mathlarger{\sum}_{j=1}^{n}f(j)e\! \left(\frac{-bj}{n}\right)\quad (b\in \Z).
\end{align}

The standard representation of $f$ is obtained from the Fourier representation $\widehat{f}$ by
\begin{align}\label{FFT2}
f(b)=\frac1{n} \mathlarger{\sum}_{j=1}^{n}\widehat{f}(j)e\!\left(\frac{bj}{n}\right) \quad (b\in \Z),
\end{align}
which is the {\it inverse discrete Fourier transform} (IDFT); see, e.g., \cite[p.\ 109]{MOVA}.

\section{Solutions with distinct coordinates}\label{Sec_3}

In this section, we obtain an explicit formula for the number of solutions $\langle x_1,\ldots,x_k \rangle \in \Z_{n}^k$ of the linear congruence $x_1+\cdots +x_k\equiv b \pmod{n}$, with all $x_i$ distinct modulo $n$. First, we need some preliminary results.

\begin{lemma}\label{lem: cyclo 1}
Let $n$ be a positive integer and $m$ be a non-negative integer. We have 

$$
\mathlarger{\prod}_{j=1}^{n}\left(1-ze^{2\pi ijm/n}\right)=(1-z^{\frac{n}{d}})^d,
$$
where $d=(m,n)$.
\end{lemma}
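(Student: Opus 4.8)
The plan is to exploit the structure of the exponents $e^{2\pi i jm/n}$ as $j$ ranges over $\{1,\dots,n\}$. Write $d=(m,n)$ and $n = d n'$, $m = d m'$ with $(m',n')=1$. Then $e^{2\pi i jm/n} = e^{2\pi i jm'/n'}$, so the value of the factor depends only on $j \bmod n'$. As $j$ runs through a complete residue system modulo $n$, it runs $d$ times through a complete residue system modulo $n'$; hence
$$
\prod_{j=1}^{n}\left(1-ze^{2\pi ijm/n}\right) = \left(\prod_{j=1}^{n'}\left(1-ze^{2\pi ijm'/n'}\right)\right)^{d}.
$$
So the task reduces to the coprime case: showing that if $(m',n')=1$, then $\prod_{j=1}^{n'}(1-z\,e^{2\pi i jm'/n'}) = 1-z^{n'}$.

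For that inner product, the key observation is that since $(m',n')=1$, the map $j \mapsto jm' \bmod n'$ is a permutation of the residues modulo $n'$. Therefore $\{e^{2\pi i jm'/n'} : j=1,\dots,n'\}$ is exactly the set of all $n'$-th roots of unity, each occurring once. Consequently
$$
\prod_{j=1}^{n'}\left(1-z\,e^{2\pi i jm'/n'}\right) = \prod_{\zeta^{n'}=1}(1-z\zeta).
$$
Now I would invoke the factorization $X^{n'}-1 = \prod_{\zeta^{n'}=1}(X-\zeta)$; substituting $X = 1/z$ and multiplying both sides by $z^{n'}$ (or, equivalently, pairing each $(1-z\zeta)$ against the roots of $1 - z^{n'}$, which are precisely the $\zeta^{-1}$, themselves running over all $n'$-th roots of unity) yields $\prod_{\zeta^{n'}=1}(1-z\zeta) = 1-z^{n'}$. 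Combining this with the reduction above gives $(1-z^{n'})^d = (1-z^{n/d})^d$, as claimed. The case $m=0$ (so $d=n$, $n'=1$) and the degenerate case $z=0$ are immediate from the formula and can be checked separately if desired.

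I do not anticipate a serious obstacle here; the only point requiring a little care is the bookkeeping in the reduction step — making sure that the grouping of the index $j$ into residue classes modulo $n'$ is done correctly and that the exponent $d$ comes out right — together with the clean statement that $j\mapsto jm'$ permutes $\Z_{n'}$ when $(m',n')=1$. Everything else is the standard factorization of $X^{n'}-1$ over $\C$.
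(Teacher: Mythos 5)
Your proof is correct and follows essentially the same route as the paper: the same reduction writing $m=dm'$, $n=dn'$ and grouping the product over $j$ into $d$ copies of a product over a complete residue system modulo $n'$, followed by the coprime case. The only cosmetic difference is that you prove the identity $\prod_{j=1}^{n'}(1-ze^{2\pi ij/n'})=1-z^{n'}$ (via the permutation $j\mapsto jm'$ and the factorization of $X^{n'}-1$) where the paper simply cites it as well known.
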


\begin{proof}
It is well-known that (see, e.g., \cite[p. 167]{STAN})

$$
1-z^n=\mathlarger{\prod}_{j=1}^{n}\left(1-ze^{2\pi ij/n}\right).
$$
Now, letting $d=(m,n)$, we obtain

\begin{align*}
\mathlarger{\prod}_{j=1}^{n}\left(1-ze^{2\pi ijm/n}\right) &= \mathlarger{\prod}_{j=1}^{n} \left(1-ze^{2\pi ij\frac{m/d}{n/d}}\right)\\
&= \left(\mathlarger{\prod}_{j=1}^{n/d}\left(1-ze^{2\pi ij\frac{m/d}{n/d}}\right)\right)^d\\
&{\stackrel{(\frac{m}{d},\frac{n}{d})=1}{=}} \left(\mathlarger{\prod}_{j=1}^{n/d}\left(1-ze^{\frac{2\pi ij}{n/d}}\right)\right)^d=(1-z^{\frac{n}{d}})^d.
\end{align*}
\end{proof}

Similarly, we can prove that:

\begin{lemma}\label{lem: cyclo 2}
Let $n$ be a positive integer and $m$ be a non-negative integer. We have 

$$
\mathlarger{\prod}_{j=1}^{n}\left(z-e^{2\pi ijm/n}\right)=(z^{\frac{n}{d}}-1)^d,
$$
where $d=(m,n)$.
\end{lemma}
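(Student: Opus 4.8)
The plan is to imitate the proof of Lemma~\ref{lem: cyclo 1} almost verbatim, swapping the factorization $1-z^{n}=\prod_{j=1}^{n}(1-ze^{2\pi ij/n})$ for its companion
$$
z^{n}-1=\prod_{j=1}^{n}\left(z-e^{2\pi ij/n}\right),
$$
which is equally standard: the right-hand side is the monic polynomial whose roots are exactly the $n$-th roots of unity (see, e.g., \cite[p.~167]{STAN}).

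First I would set $d=(m,n)$ and note that $e^{2\pi ijm/n}=e^{2\pi ij(m/d)/(n/d)}$, so the factor $z-e^{2\pi ijm/n}$ is a periodic function of $j$ with period $n/d$. Grouping the $n$ factors into $d$ identical blocks of length $n/d$ then yields
$$
\prod_{j=1}^{n}\left(z-e^{2\pi ijm/n}\right)=\left(\prod_{j=1}^{n/d}\left(z-e^{2\pi ij(m/d)/(n/d)}\right)\right)^{d}.
$$
Since $(m/d,n/d)=1$, as $j$ runs through a complete residue system modulo $n/d$ so does $j(m/d)$; hence the inner product equals $\prod_{j=1}^{n/d}\bigl(z-e^{2\pi ij/(n/d)}\bigr)$, which by the displayed factorization (with $n/d$ in place of $n$) is $z^{n/d}-1$. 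Raising to the $d$-th power gives $(z^{n/d}-1)^{d}$, as claimed.

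I do not expect any genuine obstacle here, since this is the mirror image of Lemma~\ref{lem: cyclo 1}; the only point worth a glance is the degenerate case $m=0$, where the convention $(0,n)=n$ forces $d=n$ and the identity collapses to $\prod_{j=1}^{n}(z-1)=(z-1)^{n}$, which is trivially true and consistent with the general formula.
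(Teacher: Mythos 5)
Your proof is correct and is exactly the argument the paper intends: the paper proves Lemma~\ref{lem: cyclo 1} this way and dismisses Lemma~\ref{lem: cyclo 2} with ``Similarly, we can prove that,'' and your mirroring of that proof via $z^{n}-1=\prod_{j=1}^{n}\left(z-e^{2\pi ij/n}\right)$, the grouping into $d$ blocks of length $n/d$, and the coprimality of $m/d$ and $n/d$ is precisely the intended adaptation. The remark on the $m=0$ case is a fine (if unnecessary) check.
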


Now, we simply get:

\begin{corollary}\label{cor: cyclo 1}
Let $n$ be a positive integer and $m$, $k$ be non-negative integers. The coefficient of $z^k$ in

$$
\mathlarger{\prod}_{j=1}^{n}\left(1+ze^{2\pi ijm/n}\right),
$$
is $(-1)^{k+\frac{kd}{n}}\binom{d}{\frac{kd}{n}}$, where $d=(m,n)$. Note that the binomial coefficient $\binom{d}{\frac{kd}{n}}$ equals zero if $\frac{kd}{n}$ is not an integer.
\end{corollary}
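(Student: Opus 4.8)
The plan is to substitute $z \mapsto -z$ into Lemma~\ref{lem: cyclo 1} to convert the product $\prod_{j=1}^{n}(1-ze^{2\pi ijm/n})$ into $\prod_{j=1}^{n}(1+ze^{2\pi ijm/n})$, and then read off the coefficient of $z^k$ on both sides. Concretely, replacing $z$ by $-z$ in the identity
$$
\mathlarger{\prod}_{j=1}^{n}\left(1-ze^{2\pi ijm/n}\right)=\left(1-z^{\frac{n}{d}}\right)^d
$$
yields
$$
\mathlarger{\prod}_{j=1}^{n}\left(1+ze^{2\pi ijm/n}\right)=\left(1-(-z)^{\frac{n}{d}}\right)^d = \left(1-(-1)^{\frac{n}{d}}z^{\frac{n}{d}}\right)^d,
$$
so the whole task reduces to extracting the $z^k$-coefficient of the right-hand side.

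Next I would expand the right-hand side by the binomial theorem:
$$
\left(1-(-1)^{\frac{n}{d}}z^{\frac{n}{d}}\right)^d = \mathlarger{\sum}_{r=0}^{d}\binom{d}{r}\left(-(-1)^{\frac{n}{d}}\right)^r z^{\frac{rn}{d}} = \mathlarger{\sum}_{r=0}^{d}\binom{d}{r}(-1)^{r+\frac{rn}{d}} z^{\frac{rn}{d}}.
$$
The exponent of $z$ in the term indexed by $r$ is $rn/d$, so a contribution to $z^k$ occurs precisely when $rn/d = k$, i.e. $r = kd/n$. This forces $kd/n$ to be a non-negative integer; if it is not, no term matches and the coefficient is $0$, consistent with the convention $\binom{d}{kd/n}=0$ stated in the corollary. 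When $r = kd/n$ is an integer, the coefficient is $\binom{d}{kd/n}(-1)^{kd/n + k}$, since substituting $r = kd/n$ into $(-1)^{r + rn/d}$ gives $(-1)^{kd/n + k}$. This is exactly the claimed value $(-1)^{k+\frac{kd}{n}}\binom{d}{\frac{kd}{n}}$.

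There is essentially no obstacle here; the only point requiring a moment of care is the sign bookkeeping. One must note that $(-(-1)^{n/d})^r = (-1)^r (-1)^{rn/d} = (-1)^{r(1 + n/d)}$, and then that at $r = kd/n$ the product of signs $(-1)^{r}\cdot(-1)^{rn/d}$ becomes $(-1)^{kd/n}\cdot(-1)^{k}$ because $rn/d = k$. A secondary subtlety is that when $d = 0$ (i.e. $m = n = 0$), the statement is vacuous or degenerate; but since $n$ is assumed positive, $d = (m,n) \geq 1$ always, and when $m=0$ we have $d = n$ and the product is $(1+z)^n$, whose $z^k$-coefficient is $\binom{n}{k} = \binom{d}{kd/n}$ with $kd/n = k$, matching the formula. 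I would present this verification in one or two lines as a sanity check and otherwise keep the proof to the short substitution-and-binomial-expansion argument above.
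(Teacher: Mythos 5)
Your proof is correct and follows exactly the route the paper intends: the paper states the corollary with no argument beyond ``Now, we simply get,'' meaning it is obtained by the same substitution $z\mapsto -z$ in Lemma~\ref{lem: cyclo 1} followed by a binomial expansion, which is precisely what you carried out. Your sign bookkeeping $(-1)^{r+rn/d}=(-1)^{kd/n+k}$ at $r=kd/n$ and the observation that no term matches when $kd/n\notin\Z$ are both accurate, so nothing is missing.
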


Now, we are ready to obtain an explicit formula for the number of solutions of the linear congruence.

\begin{theorem} \label{main thm dist ai=1}
Let $n$ be a positive integer and $b \in \Z_n$. The number $N_n(k,b)$ of solutions $\langle x_1,\ldots,x_k \rangle \in \Z_{n}^k$ of the linear congruence $x_1+\cdots +x_k\equiv b \pmod{n}$, with all $x_i$ distinct modulo $n$, is
\begin{align} \label{main thm dist ai=1: for}
N_n(k,b)=\frac{(-1)^k k!}{n}\mathlarger{\sum}_{d\, \mid \, (n,\;k)}(-1)^{\frac{k}{d}}c_{d}(b)\binom{\frac{n}{d}}{\frac{k}{d}}.
\end{align}
\end{theorem}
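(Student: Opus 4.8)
The plan is to count the distinct-coordinate solutions via a generating function in a formal variable $z$, extract a coefficient, and then use the orthogonality relation (\ref{Orth1 for cons}) to pin the sum to the residue $b$. First I would count \emph{ordered} tuples $\langle x_1,\dots,x_k\rangle$ with distinct entries; by dividing by $k!$ at the end this is the same as counting $k$-subsets $\{x_1,\dots,x_k\}\subseteq\Z_n$ with $x_1+\cdots+x_k\equiv b\pmod n$. The key device is the polynomial $\prod_{j=1}^{n}(1+z\,w^{j})$ where $w$ is an $n$-th root of unity: expanding it, the coefficient of $z^k$ is $\sum w^{x_1+\cdots+x_k}$ summed over all $k$-subsets of $\{1,\dots,n\}$ (equivalently of $\Z_n$). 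So if I set $w=e(m/n)$ for a fixed $m$, the coefficient of $z^k$ in $\prod_{j=1}^{n}(1+ze^{2\pi ijm/n})$ equals $\sum_{S}e(m\sigma(S)/n)$, where $\sigma(S)$ is the sum of the elements of the subset $S$. By Corollary~\ref{cor: cyclo 1}, that coefficient is exactly $(-1)^{k+kd/n}\binom{d}{kd/n}$ with $d=(m,n)$.

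Next I would average over $m$ to isolate the subsets whose element-sum is $\equiv b$. Since
$$
\frac1n\sum_{m=1}^{n}e\!\left(\frac{m(\sigma(S)-b)}{n}\right)=
\begin{cases}1,&\sigma(S)\equiv b\pmod n,\\0,&\text{otherwise,}\end{cases}
$$
summing the previous identity against $e(-mb/n)$ over $m=1,\dots,n$ and dividing by $n$ gives
$$
\#\{k\text{-subsets }S:\ \sigma(S)\equiv b\}=\frac1n\sum_{m=1}^{n}(-1)^{k+\frac{k(m,n)}{n}}\binom{(m,n)}{\frac{k(m,n)}{n}}e\!\left(\frac{-mb}{n}\right).
$$
Multiplying by $k!$ yields $N_n(k,b)$. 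Now I would regroup the sum over $m$ according to the value $d=(m,n)$: writing $m=(n/d)\cdot r$ one needs $\binom{d}{kd/n}$ to be nonzero, which forces $n/d\mid k$, i.e. $d$ ranges only over divisors of $n$ that are multiples of $n/(n,k)$ — equivalently (after replacing $d$ by $n/d$) the summation index will run over $d\mid (n,k)$ as in (\ref{main thm dist ai=1: for}). The terms $e(-mb/n)$ with $(m,n)=d$ collect into a Ramanujan sum: $\sum_{(m,n)=d}e(-mb/n)=\sum_{(r,n/d)=1}e(-rb/(n/d))=c_{n/d}(-b)=c_{n/d}(b)$.

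Carrying out this reindexing carefully is the step I expect to be the main obstacle: I must track the sign $(-1)^{k+k(m,n)/n}$, rewrite the binomial $\binom{(m,n)}{k(m,n)/n}$ in terms of the new index, and confirm that after substituting $d\mapsto n/d$ the exponent $k/d$ appears correctly, the binomial becomes $\binom{n/d}{k/d}$, and the sign becomes $(-1)^k(-1)^{k/d}$ — matching the claimed formula $N_n(k,b)=\frac{(-1)^k k!}{n}\sum_{d\mid(n,k)}(-1)^{k/d}c_d(b)\binom{n/d}{k/d}$. I should also double-check the boundary cases (e.g. $m=n$, which contributes $d=n$; and whether $b=0$ versus $b\neq 0$ causes any discrepancy, using $c_d(0)=\varphi(d)$), and verify the edge behavior when $k>n$ (where $N_n(k,b)$ must be $0$, consistent with $\binom{n/d}{k/d}=0$). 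Once the bookkeeping is aligned, the identity follows directly; no deep input beyond Corollary~\ref{cor: cyclo 1}, the finite geometric-sum orthogonality, and the definition (\ref{def1}) of $c_d(b)$ is needed.
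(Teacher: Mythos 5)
Your proposal is correct and follows essentially the same route as the paper: the generating function $\prod_{j=1}^{n}\left(1+ze^{2\pi ijm/n}\right)$, Corollary~\ref{cor: cyclo 1}, inversion over $m$ (the paper phrases your root-of-unity orthogonality as the DFT/IDFT pair (\ref{FFT1})--(\ref{FFT2})), and the same regrouping by $d=(m,n)$ into Ramanujan sums followed by the substitution $d\mapsto n/d$, with the binomial's vanishing restricting the sum to $d\mid(n,k)$. (One minor slip: for $(m,n)=d$ one writes $m=dr$ with $(r,n/d)=1$, not $m=(n/d)r$; the Ramanujan-sum identity you state right afterwards is the correct one.)
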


\begin{proof}
It is well-known that (see, e.g., \cite[pp. 3-4]{GUP}) the number of partitions of $b$ into exactly $k$ \textit{distinct} parts each taken from the given set $A$, is the coefficient of $q^bz^k$ in 

$$
\mathlarger{\prod}_{j \in A}\left(1+zq^j\right).
$$
Now, take $A=\Z_n$ and $q=e^{2\pi im/n}$, where $m$ is a non-negative integer. Then, the number $P_n(k,b)$ of partitions of $b$ into exactly $k$ \textit{distinct} parts each taken from $\Z_n$ (that is, the number of solutions of the above linear congruence, with all $x_i$ distinct modulo $n$, if order does not matter), is the coefficient of $e^{2\pi ibm/n}z^k$ in 

$$
\mathlarger{\prod}_{j=1}^{n}\left(1+ze^{2\pi ijm/n}\right).
$$
This in turn implies that

$$
\mathlarger{\sum}_{b=1}^{n}P_n(k,b)e^{2\pi ibm/n} = \text{the coefficient of $z^k$ in $\mathlarger{\prod}_{j=1}^{n}\left(1+ze^{2\pi ijm/n}\right)$}.
$$
Let $e(x)=\exp(2\pi ix)$. Note that $N_n(k,b)=k!P_n(k,b)$. Now, using Corollary~\ref{cor: cyclo 1}, we get

$$
\mathlarger{\sum}_{b=1}^{n}N_n(k,b)e\left(\frac{bm}{n}\right) = (-1)^{k+\frac{kd}{n}}k!\binom{d}{\frac{kd}{n}},
$$
where $d=(m,n)$. Now, by (\ref{FFT1}) and (\ref{FFT2}), we obtain
\begin{align*}
N_n(k,b) &= \frac{(-1)^{k}k!}{n}\mathlarger{\sum}_{m=1}^{n}(-1)^{\frac{kd}{n}}e\left(\frac{-bm}{n}\right)\binom{d}{\frac{kd}{n}}\\
&= \frac{(-1)^{k}k!}{n}\mathlarger{\sum}_{d\, \mid \, n}\mathlarger{\sum}_{\substack{m=1 \\ (m,\;n)=d}}^{n}(-1)^{\frac{kd}{n}}e\left(\frac{-bm}{n}\right)\binom{d}{\frac{kd}{n}}\\
&{\stackrel{m'=m/d}{=}} \;\; \frac{(-1)^{k}k!}{n}\mathlarger{\sum}_{d\, \mid \, n}\mathlarger{\sum}_{\substack{m'=1 \\ (m',\;n/d)=1}}^{n/d}(-1)^{\frac{kd}{n}}e\left(\frac{-bm'}{n/d}\right)\binom{d}{\frac{kd}{n}}\\
&= \frac{(-1)^{k}k!}{n}\mathlarger{\sum}_{d\, \mid \, n}(-1)^{\frac{kd}{n}}c_{n/d}(-b)\binom{d}{\frac{kd}{n}}\\
&= \frac{(-1)^{k}k!}{n}\mathlarger{\sum}_{d\, \mid \, n}(-1)^{\frac{kd}{n}}c_{n/d}(b)\binom{d}{\frac{kd}{n}}\\
&= \frac{(-1)^{k}k!}{n}\mathlarger{\sum}_{d\, \mid \, n}(-1)^{\frac{k}{d}}c_{d}(b)\binom{\frac{n}{d}}{\frac{k}{d}}\\
&= \frac{(-1)^{k}k!}{n}\mathlarger{\sum}_{d\, \mid \, (n,\;k)}(-1)^{\frac{k}{d}}c_{d}(b)\binom{\frac{n}{d}}{\frac{k}{d}}.
\end{align*}
\end{proof}

\begin{corollary} \label{special cases: b=0,1}
If $n$ or $k$ is odd then from (\ref{main thm dist ai=1: for}) we obtain the following important special cases of the function $P_n(k,b)=\frac{1}{k!}N_n(k,b)$:
\begin{align} \label{special case: b=0}
P_n(k,0)= \frac{1}{n}\mathlarger{\sum}_{d\, \mid \, (n,\;k)}\varphi(d)\binom{\frac{n}{d}}{\frac{k}{d}},
\end{align}
\begin{align} \label{special case: b=1}
P_n(k,1)= \frac{1}{n}\mathlarger{\sum}_{d\, \mid \, (n,\;k)}\mu(d)\binom{\frac{n}{d}}{\frac{k}{d}}.
\end{align}
\end{corollary}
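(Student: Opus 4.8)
The plan is to specialize the general formula \eqref{main thm dist ai=1: for} at $b=0$ and $b=1$, using the two basic evaluations of Ramanujan sums recalled in Section~\ref{Sec_2}. First I would recall that $c_d(0)=\varphi(d)$ for every $d\geq 1$ and $c_d(1)=\mu(d)$ for every $d\geq 1$; both are noted right after \eqref{def1}. Substituting $b=0$ into \eqref{main thm dist ai=1: for} gives
$$
N_n(k,0)=\frac{(-1)^k k!}{n}\sum_{d\,\mid\,(n,k)}(-1)^{k/d}\varphi(d)\binom{n/d}{k/d},
$$
and substituting $b=1$ gives the same expression with $\varphi(d)$ replaced by $\mu(d)$. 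Dividing through by $k!$ turns $N_n(k,b)$ into $P_n(k,b)$, so the content of the corollary is exactly that the sign factors $(-1)^k$ and $(-1)^{k/d}$ collapse to $1$ under the stated hypothesis.

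The key step, then, is the sign bookkeeping: I need to show $(-1)^k(-1)^{k/d}=1$ for every divisor $d$ of $(n,k)$, given that $n$ or $k$ is odd. If $k$ is odd, then $(-1)^k=-1$, and since $d\mid k$ forces $k/d$ to be odd as well (a divisor of an odd number is odd, and the quotient of an odd number by an odd divisor is odd), $(-1)^{k/d}=-1$, so the product is $+1$. If instead $n$ is odd, then any $d\mid (n,k)$ divides $n$, hence $d$ is odd; but I also need the parity of $k$ to match that of $k/d$, which holds because $d$ odd implies $k$ and $k/d$ have the same $2$-adic valuation, so $(-1)^k=(-1)^{k/d}$ and again the product is $+1$. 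In both cases every term of the sum has its sign normalized to $+1$, and the prefactor $(-1)^k k!/n$ becomes, after dividing by $k!$, just $1/n$.

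I do not anticipate a genuine obstacle here — this is purely a matter of tracking parities — but the one place to be slightly careful is the case analysis on whether $n$ or $k$ is odd, since the mechanism by which the signs cancel is different in the two cases (in the first it uses $d\mid k$ directly; in the second it uses $d\mid n$ together with the fact that an odd $d$ does not change the $2$-adic valuation of $k$). Once that is laid out, \eqref{special case: b=0} and \eqref{special case: b=1} follow immediately by reading off the resulting expressions and recalling $P_n(k,b)=\frac{1}{k!}N_n(k,b)$, which was already observed in the proof of Theorem~\ref{main thm dist ai=1}.
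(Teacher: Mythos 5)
Your proof is correct and follows exactly the route the paper intends (the corollary is stated without proof as a direct consequence of \eqref{main thm dist ai=1: for}): substitute $c_d(0)=\varphi(d)$ and $c_d(1)=\mu(d)$, and observe that for every $d\mid(n,k)$ the divisor $d$ is odd when $n$ or $k$ is odd, so $k$ and $k/d$ have the same parity and the factor $(-1)^{k+k/d}$ equals $1$. Your parity bookkeeping in both cases is accurate, so nothing further is needed.
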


\begin{corollary}
If $(n,k)=1$ then (\ref{main thm dist ai=1: for}) is independent of $b$ and simplifies as 
$$N_n(k)=\frac{k!}{n}\binom{n}{k}.$$
(Of course, this can also be proved directly.) If in addition we have $n=2k+1$ then 
$$
P_n(k)=\frac{1}{k!}N_n(k)=\frac{1}{2k+1}\binom{2k+1}{k}=\frac{1}{k+1}\binom{2k}{k},
$$
which is the Catalan number.
\end{corollary}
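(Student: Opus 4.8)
The plan is to obtain the first assertion as an immediate specialization of Theorem~\ref{main thm dist ai=1}, and then to record the promised direct argument, which is purely combinatorial. First I would observe that when $(n,k)=1$ the index set $\{d : d \mid (n,k)\}$ appearing in~(\ref{main thm dist ai=1: for}) is just $\{1\}$, so the sum collapses to the single term $d=1$. Since $c_1(b)=1$ for every $b$ and $(-1)^{k/1}=(-1)^k$, the two sign factors $(-1)^k$ and $(-1)^{k/d}$ cancel, and~(\ref{main thm dist ai=1: for}) reduces to $N_n(k,b)=\frac{k!}{n}\binom{n}{k}$. In particular the right-hand side no longer depends on $b$, which is exactly what licenses writing $N_n(k)$.

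For the direct proof I would count ordered $k$-tuples $\langle x_1,\ldots,x_k\rangle\in\Z_n^k$ with pairwise distinct coordinates, before imposing any congruence: there are $n(n-1)\cdots(n-k+1)=k!\binom{n}{k}$ of them. The additive group $\Z_n$ acts on this set by simultaneous translation, $t\cdot\langle x_1,\ldots,x_k\rangle=\langle x_1+t,\ldots,x_k+t\rangle$; this preserves distinctness, and it is a free action, since $x_1+t\equiv x_1\pmod n$ forces $t\equiv 0$, so every orbit has size exactly $n$. Within a single orbit the coordinate sum of $t\cdot\langle x_1,\ldots,x_k\rangle$ equals $(x_1+\cdots+x_k)+kt$; as $t$ ranges over $\Z_n$ and $(k,n)=1$, the quantity $kt$ ranges over all of $\Z_n$, so each residue class $b$ is the sum of exactly one tuple in that orbit. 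Hence the solutions of $x_1+\cdots+x_k\equiv b\pmod n$ with distinct coordinates are obtained by selecting one representative from each of the $k!\binom{n}{k}/n$ orbits, giving $N_n(k)=\frac{k!}{n}\binom{n}{k}$ once more.

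Finally, for the Catalan specialization I would note that $(2k+1,k)=(1,k)=1$, so the hypothesis of the first part holds with $n=2k+1$; thus $N_{2k+1}(k)=\frac{k!}{2k+1}\binom{2k+1}{k}$ and $P_{2k+1}(k)=\frac{1}{k!}N_{2k+1}(k)=\frac{1}{2k+1}\binom{2k+1}{k}$. The remaining identity is the elementary factorial manipulation
\[
\frac{1}{2k+1}\binom{2k+1}{k}=\frac{1}{2k+1}\cdot\frac{(2k+1)!}{k!\,(k+1)!}=\frac{(2k)!}{k!\,(k+1)!}=\frac{1}{k+1}\binom{2k}{k},
\]
which is the standard closed form for the $k$-th Catalan number. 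There is essentially no obstacle in this corollary; the only points that merit a moment's care are that the translation action is genuinely free—so that every orbit has size exactly $n$ rather than a proper divisor of it—and that the coprimality hypothesis $(n,k)=1$ is indeed met when one sets $n=2k+1$.
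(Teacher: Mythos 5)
Your proposal is correct and follows the paper's intended route: the corollary is an immediate specialization of Theorem~\ref{main thm dist ai=1}, since $(n,k)=1$ forces the sum to the single term $d=1$ with $c_1(b)=1$, and the signs cancel, giving $N_n(k,b)=\frac{k!}{n}\binom{n}{k}$ independently of $b$; the Catalan case then follows from $(2k+1,k)=1$ and the standard factorial identity, exactly as you write. Your additional translation-action argument (free action of $\Z_n$ by simultaneous shifts, with $kt$ covering all residues because $(k,n)=1$) correctly supplies the ``direct'' proof the paper only alludes to without writing down, so it is a welcome complement rather than a deviation.
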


\begin{rema}
Using (\ref{Orth1 for cons}), it is easy to see that (\ref{main thm dist ai=1: for}) also works when $k=0$.
\end{rema}

Now, we introduce the important function $T_n(b)$ which is the sum of $P_n(k,b)$ over $k$. There are several interpretations for the function $T_n(b)$, for example, $T_n(b)$ can be interpreted as the number of subsets of the set $\lbrace 1, 2, \ldots, n \rbrace$ which sum to $b$ modulo $n$. 

\begin{corollary}\label{nice function}
Let $T_n(b):=\sum_{k=0}^{n}\frac{1}{k!}N_n(k,b)=\sum_{k=0}^{n}P_n(k,b)$. Then we have 
\begin{align} \label{nice function: for}
T_n(b)=\frac{1}{n}\mathlarger{\sum}_{\substack{d\, \mid \, n \\ d \; \textnormal{odd}}}c_{d}(b)2^{\frac{n}{d}}.
\end{align}
\end{corollary}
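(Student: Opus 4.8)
The plan is to sum the formula (\ref{main thm dist ai=1: for}) for $N_n(k,b)/k! = P_n(k,b)$ over $k$ from $0$ to $n$ and rearrange the double sum so that the outer sum runs over divisors $d \mid n$ and the inner sum collects the binomial coefficients $\binom{n/d}{k/d}$ over admissible $k$. Concretely, $T_n(b) = \sum_{k=0}^{n} \frac{1}{n}\sum_{d \mid (n,k)} (-1)^{k+\frac{k}{d}} c_d(b)\binom{n/d}{k/d}$; swapping the order of summation, the outer index becomes $d \mid n$ and the inner index is over those $k \in \{0,\ldots,n\}$ with $d \mid k$, i.e.\ $k = dj$ for $j = 0, \ldots, n/d$. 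This turns the inner sum into $\sum_{j=0}^{n/d} (-1)^{dj + j} c_d(b) \binom{n/d}{j} = c_d(b) \sum_{j=0}^{n/d} \bigl((-1)^{d+1}\bigr)^{j} \binom{n/d}{j}$.

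The inner sum is now a binomial expansion: it equals $\bigl(1 + (-1)^{d+1}\bigr)^{n/d}$. The key case split is on the parity of $d$. If $d$ is even, then $(-1)^{d+1} = -1$, so $1 + (-1)^{d+1} = 0$ and the whole term vanishes (here one should note $n/d \geq 1$ so $0^{n/d} = 0$; the $d = n$ even case with exponent $1$ still gives $0$). If $d$ is odd, then $(-1)^{d+1} = 1$, so $1 + (-1)^{d+1} = 2$ and the term contributes $c_d(b)\, 2^{n/d}$. Hence only odd divisors $d$ of $n$ survive, and
\[
T_n(b) = \frac{1}{n}\sum_{\substack{d \mid n \\ d\ \mathrm{odd}}} c_d(b)\, 2^{n/d},
\]
which is (\ref{nice function: for}).

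I expect the only genuine subtlety — barely an obstacle — to be bookkeeping of the sign $(-1)^{k + k/d}$ when reindexing $k = dj$: one must verify that $(-1)^{dj+j} = \bigl((-1)^{d+1}\bigr)^j$, which is immediate since $dj + j = (d+1)j$. One should also double-check the edge term $k=0$ (corresponding to $j=0$, which contributes $c_d(b)$ to every divisor and is exactly what makes the $d$-odd geometric-style sum start at $j=0$), consistent with the remark preceding the statement that (\ref{main thm dist ai=1: for}) is valid at $k=0$. Everything else is a direct application of the binomial theorem, so no hard analytic input is needed beyond Theorem~\ref{thm:Ram Mob}-level facts about $c_d(b)$, which are not even required for the derivation itself.
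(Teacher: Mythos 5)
Your proposal is correct and follows essentially the same route as the paper's own proof: sum the formula of Theorem~\ref{main thm dist ai=1} over $k$, interchange the order of summation, reindex $k=dj$, and observe that the inner binomial sum vanishes for even $d$ and equals $2^{n/d}$ for odd $d$. The only cosmetic difference is that you package the parity split as the single binomial identity $\bigl(1+(-1)^{d+1}\bigr)^{n/d}$, while the paper separates the odd and even divisors explicitly before evaluating; the content is identical.
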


\begin{proof} We have
\begin{align*}
T_n(b)&= \mathlarger{\sum}_{k=0}^{n} \frac{(-1)^{k}}{n}\mathlarger{\sum}_{d\, \mid \, (n,\;k)}(-1)^{\frac{k}{d}}\binom{\frac{n}{d}}{\frac{k}{d}}c_{d}(b)\\
&= \frac{1}{n}\mathlarger{\sum}_{d\, \mid \, n}c_{d}(b)\mathlarger{\sum}_{\substack{k=0 \\ d\, \mid \,k}}^{n}(-1)^{k+\frac{k}{d}}\binom{\frac{n}{d}}{\frac{k}{d}}\\
&= \frac{1}{n}\mathlarger{\sum}_{\substack{d\, \mid \, n \\ d \; \textnormal{odd}}}c_{d}(b)\mathlarger{\sum}_{\substack{k=0 \\ d\, \mid \,k}}^{n}(-1)^{k+\frac{k}{d}}\binom{\frac{n}{d}}{\frac{k}{d}}
+\frac{1}{n}\mathlarger{\sum}_{\substack{d\, \mid \, n \\ d \; \textnormal{even}}}c_{d}(b)\mathlarger{\sum}_{\substack{k=0 \\ d\, \mid \,k}}^{n}(-1)^{k+\frac{k}{d}}\binom{\frac{n}{d}}{\frac{k}{d}}
\\
&=\frac{1}{n}\mathlarger{\sum}_{\substack{d\, \mid \, n \\ d \; \textnormal{odd}}}c_{d}(b)2^{\frac{n}{d}}.
\end{align*}
Note that in the last equality we have used the fact that if $d \mid n$ and $d$ is even then
$$
\mathlarger{\sum}_{\substack{k=0 \\ d\, \mid \,k}}^{n}(-1)^{k+\frac{k}{d}}\binom{\frac{n}{d}}{\frac{k}{d}}
= \mathlarger{\sum}_{\substack{k=0 \\ d\, \mid \,k}}^{n}(-1)^{\frac{k}{d}}\binom{\frac{n}{d}}{\frac{k}{d}}= 0.
$$
\end{proof}

What is the number of subsets of the set $\lbrace 1, 2, \ldots, n-1 \rbrace$ which sum to $b$ modulo $n$? Using Corollary~\ref{nice function}, we can obtain an explicit formula for the number of such subsets (see also \cite{MAZ}).

\begin{corollary}\label{nice function 2}
The number $T'_n(b)$ of subsets of the set $\lbrace 1, 2, \ldots, n-1 \rbrace$ which sum to $b$ modulo $n$ is  
\begin{align} \label{nice function 2: for}
T'_n(b)=\frac{1}{2}T_n(b)=\frac{1}{2n}\mathlarger{\sum}_{\substack{d\, \mid \, n \\ d \; \textnormal{odd}}}c_{d}(b)2^{\frac{n}{d}}.
\end{align}
\end{corollary}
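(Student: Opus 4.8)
The plan is to relate subsets of $\{1,2,\dots,n-1\}$ summing to $b \pmod n$ to subsets of $\{1,2,\dots,n\}$ summing to $b \pmod n$, the latter being counted by $T_n(b)$ from Corollary~\ref{nice function}. The key observation is that, working modulo $n$, the element $n$ is congruent to $0$, so including or excluding $n$ from a subset does not change its sum modulo $n$. Thus every subset $S \subseteq \{1,\dots,n-1\}$ with $\sum_{i\in S} i \equiv b \pmod n$ corresponds to exactly two subsets of $\{1,\dots,n\}$ with the same sum modulo $n$, namely $S$ itself and $S \cup \{n\}$, and this correspondence is a bijection between subsets of $\{1,\dots,n-1\}$ summing to $b$ and (ordered by the presence of $n$) pairs of subsets of $\{1,\dots,n\}$ summing to $b$.

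Concretely, I would partition the subsets of $\{1,\dots,n\}$ summing to $b \pmod n$ into those that contain $n$ and those that do not. Those that do not contain $n$ are exactly the subsets of $\{1,\dots,n-1\}$ summing to $b \pmod n$, counted by $T'_n(b)$. Those that do contain $n$ are in bijection (via removing $n$) with subsets of $\{1,\dots,n-1\}$ summing to $b - n \equiv b \pmod n$, again counted by $T'_n(b)$. Hence $T_n(b) = 2\,T'_n(b)$, which gives $T'_n(b) = \tfrac12 T_n(b)$. Substituting the formula~(\ref{nice function: for}) for $T_n(b)$ yields the claimed expression~(\ref{nice function 2: for}).

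There is essentially no hard step here; the only thing to be careful about is the bookkeeping that $n \equiv 0 \pmod n$ so adjoining $n$ genuinely preserves the residue of the sum, and that the map is a genuine bijection (injectivity and surjectivity are immediate from the disjoint union decomposition). One might also note that the empty set is handled correctly: it is a subset of $\{1,\dots,n-1\}$ summing to $0$, and $\{n\}$ is the matching subset of $\{1,\dots,n\}$, so the case $b \equiv 0$ is not exceptional. I expect the write-up to be just a couple of sentences invoking Corollary~\ref{nice function}.

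\begin{proof}
By Corollary~\ref{nice function}, $T_n(b)$ counts the subsets of $\{1,2,\dots,n\}$ whose elements sum to $b$ modulo $n$. Split these subsets into two classes according to whether or not they contain $n$. A subset not containing $n$ is precisely a subset of $\{1,2,\dots,n-1\}$ summing to $b$ modulo $n$, and there are $T'_n(b)$ of these. On the other hand, since $n\equiv 0\pmod n$, removing $n$ from a subset that contains it does not change the sum modulo $n$; hence $S\mapsto S\setminus\{n\}$ is a bijection between the subsets of $\{1,2,\dots,n\}$ that contain $n$ and sum to $b$ modulo $n$ and the subsets of $\{1,2,\dots,n-1\}$ that sum to $b$ modulo $n$. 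Therefore this class also has exactly $T'_n(b)$ members, and consequently
$$
T_n(b)=2\,T'_n(b).
$$
Combining this with~(\ref{nice function: for}) gives
$$
T'_n(b)=\frac{1}{2}T_n(b)=\frac{1}{2n}\mathlarger{\sum}_{\substack{d\, \mid \, n \\ d \; \textnormal{odd}}}c_{d}(b)2^{\frac{n}{d}},
$$
as claimed.
\end{proof}
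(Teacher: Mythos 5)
Your proof is correct and uses essentially the same argument as the paper: the paper pairs each subset $A\subseteq\{1,\dots,n-1\}$ summing to $b$ with the two subsets $A$ and $A\cup\{n\}$ of $\{1,\dots,n\}$, which is exactly your partition of the subsets of $\{1,\dots,n\}$ according to whether they contain $n$. Your write-up is just slightly more explicit about the bijection and the interpretation of $T_n(b)$ from Corollary~\ref{nice function}.
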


\begin{proof}
Let $A$ be a subset of the set $\lbrace 1, 2, \ldots, n-1 \rbrace$ which sum to $b$ modulo $n$. Then $A$ and $A\cup \lbrace n \rbrace$ are both subsets of the set $\lbrace 1, 2, \ldots, n \rbrace$ and both sum to $b$ modulo $n$. Therefore, $T'_n(b)=\frac{1}{2}T_n(b)$.
\end{proof}

Ginzburg \cite{GIN} in 1967 proved an explicit formula for the number of codewords in the $q$-ary Varshamov--Tenengolts codes, where $q$ is an arbitrary positive integer. This result was later rediscovered by Stanley and Yoder \cite{STYO} in 1973, and in the binary case (that is, when $q=2$) by Sloane \cite{SLO} in 2002. Now, we give a short proof for the binary case which we derive as a consequence of our results. 

\begin{corollary}\label{VT exa tot}
The number $|VT_b(n)|$ of codewords in the Varshamov--Tenengolts code $VT_b(n)$ is
\begin{align} \label{VT exa tot: for}
|VT_b(n)|=\frac{1}{2(n+1)}\mathlarger{\sum}_{\substack{d\, \mid \, n+1 \\ d \; \textnormal{odd}}}c_{d}(b)2^{\frac{n+1}{d}}.
\end{align}
\end{corollary}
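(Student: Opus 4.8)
The plan is to recognize $VT_b(n)$ as essentially a restricted subset-sum problem modulo $n+1$. A binary vector $\langle y_1,\ldots,y_n\rangle$ with exactly $k$ ones corresponds to a $k$-element subset $S\subseteq\{1,2,\ldots,n\}$ (the support), and the defining condition $\sum_{i=1}^n i y_i \equiv b \pmod{n+1}$ says precisely that the elements of $S$ sum to $b$ modulo $n+1$. Summing over all possible weights $k$, we see that $|VT_b(n)|$ equals the number of subsets of $\{1,2,\ldots,n\}$ that sum to $b$ modulo $n+1$.

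Next I would apply Corollary~\ref{nice function 2} with $n$ replaced by $n+1$. That corollary counts $T'_{n+1}(b)$, the number of subsets of $\{1,2,\ldots,n\}$ (which is exactly $\{1,2,\ldots,(n+1)-1\}$) summing to $b$ modulo $n+1$, and gives
$$
T'_{n+1}(b)=\frac{1}{2(n+1)}\mathlarger{\sum}_{\substack{d\, \mid \, n+1 \\ d \; \textnormal{odd}}}c_{d}(b)2^{\frac{n+1}{d}}.
$$
Since the two counting problems are literally the same, $|VT_b(n)|=T'_{n+1}(b)$, which is exactly the claimed formula~(\ref{VT exa tot: for}).

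There is essentially no obstacle here: the entire content is the observation that the VT-code condition is a subset-sum congruence on $\{1,\ldots,n\}$ modulo $n+1$, after which Corollary~\ref{nice function 2} applies verbatim with the substitution $n\mapsto n+1$. The only point requiring a line of care is confirming the index set: the coordinates are indexed $1$ through $n$, the modulus is $n+1$, so the relevant ground set for the subset-sum interpretation is $\{1,2,\ldots,n\}=\{1,2,\ldots,(n+1)-1\}$, matching the hypothesis of Corollary~\ref{nice function 2} exactly. I would state this bijection explicitly and then cite Corollary~\ref{nice function 2} to finish.
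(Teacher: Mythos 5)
Your proposal is correct and matches the paper's proof essentially verbatim: the paper likewise identifies codewords of $VT_b(n)$ with subsets of $\{1,\ldots,n\}$ summing to $b$ modulo $n+1$ and then invokes Corollary~\ref{nice function 2} (with $n$ replaced by $n+1$). Your explicit remark about the substitution $n\mapsto n+1$ and the index set is a fine clarification but adds nothing beyond the paper's argument.
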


\begin{proof}
Let $\langle y_1,\ldots,y_n \rangle$ be a codeword in $VT_b(n)$. Note that $\sum_{i=1}^{n}iy_i$ is just the sum of some elements of the set $\lbrace 1, 2, \ldots, n \rbrace$. Therefore, finding the number of codewords in $VT_b(n)$ boils down to finding the number of subsets of the set $\lbrace 1, 2, \ldots, n \rbrace$ which sum to $b$ modulo $n+1$. The result now follows by a direct application of Corollary~\ref{nice function 2}.
\end{proof}

In some applications (for example, in coding theory) we also need to consider the case that all $x_i$ are \textit{positive} and \textit{distinct} modulo $n$. Now, we obtain an explicit formula for the number of such solutions.

\begin{theorem} \label{main thm dist pos ai=1}
Let $n$ be a positive integer and $b \in \Z_n$. The number $N_n^{>0}(k,b)$ of solutions $\langle x_1,\ldots,x_k \rangle \in \Z_{n}^k$ of the linear congruence $x_1+\cdots +x_k\equiv b \pmod{n}$, with all $x_i$ positive and distinct modulo $n$, is
\begin{align} \label{main thm dist pos ai=1: for}
N_n^{>0}(k,b)=\frac{(-1)^k k!}{n}\mathlarger{\sum}_{d\, \mid \, n}(-1)^{\lfloor\frac{k}{d}\rfloor}c_{d}(b)\binom{\frac{n}{d}-1}{\lfloor\frac{k}{d}\rfloor}.
\end{align}
\end{theorem}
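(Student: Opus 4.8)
The plan is to mimic the proof of Theorem~\ref{main thm dist ai=1} as closely as possible, the only change being that the index set from which the distinct parts are drawn is $\{1,2,\ldots,n\}$ with the value $n$ (equivalently $0$) \emph{excluded}, i.e. the parts range over $\{1,2,\ldots,n-1\}$. Concretely, let $P_n^{>0}(k,b)$ denote the number of $k$-element subsets of $\{1,\ldots,n-1\}$ summing to $b$ modulo $n$, so that $N_n^{>0}(k,b)=k!\,P_n^{>0}(k,b)$. By the same generating-function identity quoted from \cite{GUP}, $P_n^{>0}(k,b)$ is the coefficient of $q^b z^k$ in $\prod_{j=1}^{n-1}(1+zq^j)$; substituting $q=e^{2\pi i m/n}$ for a non-negative integer $m$ and summing over $b$, the quantity $\sum_{b=1}^{n}P_n^{>0}(k,b)e(bm/n)$ equals the coefficient of $z^k$ in
$$
\mathlarger{\prod}_{j=1}^{n-1}\left(1+ze^{2\pi ijm/n}\right)
=\frac{\displaystyle\mathlarger{\prod}_{j=1}^{n}\left(1+ze^{2\pi ijm/n}\right)}{1+z}\,,
$$
since the $j=n$ factor is simply $1+z$ (the factor $e^{2\pi i n m/n}=1$).

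The first key step is therefore to extract the coefficient of $z^k$ in $\frac{1}{1+z}\prod_{j=1}^{n}(1+ze^{2\pi ijm/n})$. By Corollary~\ref{cor: cyclo 1}, with $d=(m,n)$, we have $\prod_{j=1}^{n}(1+ze^{2\pi ijm/n})=(1+z^{n/d})^d$, so the generating function in question is $(1+z^{n/d})^d/(1+z)=(1+z^{n/d})^d\sum_{r\geq 0}(-1)^r z^r$. Picking out $z^k$: writing $k=(n/d)\cdot s+r$ with $0\le r<n/d$ is \emph{not} quite what we want because $s$ need not be $\lfloor k/d\rfloor$; rather, the only terms of $(1+z^{n/d})^d$ that contribute are $z^{(n/d)s}$ with $s\le d$, and each pairs with $z^{k-(n/d)s}$ from $\sum(-1)^r z^r$, contributing $(-1)^{k-(n/d)s}\binom{d}{s}$ provided $0\le k-(n/d)s$. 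Summing over admissible $s$ gives $\sum_{b=1}^{n}N_n^{>0}(k,b)e(bm/n)=(-1)^k k!\sum_{s=0}^{\min(d,\lfloor kd/n\rfloor)}(-1)^{(n/d)s}\binom{d}{s}$. The main obstacle is to recognize that this alternating sum telescopes to a single binomial coefficient. Indeed, using the hockey-stick / partial-sum identity $\sum_{s=0}^{t}(-1)^{s}\binom{d}{s}=(-1)^{t}\binom{d-1}{t}$ (valid for $0\le t\le d$, and giving $0$ when $t=d$, which matches $\binom{d-1}{d}=0$), together with the parity bookkeeping of $(-1)^{(n/d)s}$, one finds the inner sum equals $(-1)^{?}\binom{(n/d)-1}{\lfloor k/d\rfloor}$ up to a sign that I expect to combine with $(-1)^k$ to produce exactly the $(-1)^{\lfloor k/d\rfloor}$ appearing in \eqref{main thm dist pos ai=1: for}. (Checking the edge cases $n/d$ even versus odd, and $d\mid k$ versus $d\nmid k$, is the delicate part; I would verify it against small numerical cases such as $n=4,5,6$ before trusting the sign.)

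Once the closed form $\sum_{b=1}^{n}N_n^{>0}(k,b)e(bm/n)=(-1)^{k}k!\,(-1)^{\lfloor k/d\rfloor}\binom{(n/d)-1}{\lfloor k/d\rfloor}$ is established (with $d=(m,n)$), the remaining step is purely the inverse DFT manipulation already carried out verbatim in the proof of Theorem~\ref{main thm dist ai=1}: applying \eqref{FFT1}--\eqref{FFT2},
$$
N_n^{>0}(k,b)=\frac{(-1)^{k}k!}{n}\mathlarger{\sum}_{m=1}^{n}(-1)^{\lfloor\frac{k}{d}\rfloor}e\!\left(\frac{-bm}{n}\right)\binom{\tfrac{n}{d}-1}{\lfloor\tfrac{k}{d}\rfloor},
$$
then grouping the $m$'s by $d=(m,n)$, substituting $m'=m/d$, and using $\sum_{(m',\,n/d)=1}e(-bm'/(n/d))=c_{n/d}(-b)=c_{n/d}(b)$, and finally reindexing $d\mapsto n/d$ to obtain
$$
N_n^{>0}(k,b)=\frac{(-1)^{k}k!}{n}\mathlarger{\sum}_{d\,\mid\,n}(-1)^{\lfloor\frac{k}{d}\rfloor}c_{d}(b)\binom{\tfrac{n}{d}-1}{\lfloor\tfrac{k}{d}\rfloor},
$$
which is \eqref{main thm dist pos ai=1: for}. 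Note that, in contrast to Theorem~\ref{main thm dist ai=1}, the sum here genuinely ranges over all $d\mid n$, not merely $d\mid(n,k)$, because $\binom{(n/d)-1}{\lfloor k/d\rfloor}$ need not vanish when $d\nmid k$; that is the structural reason the "positive" version looks different, and it is worth a remark in the final writeup.
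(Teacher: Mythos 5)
Your route is genuinely different from the paper's: the paper deduces Theorem~\ref{main thm dist pos ai=1} from Theorem~\ref{main thm dist ai=1} via the recursion $N_n(k,b)=N_n^{>0}(k,b)+kN_n^{>0}(k-1,b)$ together with Pascal's rule and floor-function bookkeeping, whereas you redo the generating-function/DFT argument directly with the parts restricted to $\{1,\ldots,n-1\}$. That direct route can be made to work (and it explains structurally why the sum runs over all $d\mid n$), but as written your central computation is both stated incorrectly and, by your own admission, left unverified. First, $\prod_{j=1}^{n}(1+ze^{2\pi ijm/n})$ is not $(1+z^{n/d})^d$ in general: replacing $z$ by $-z$ in Lemma~\ref{lem: cyclo 1} gives $(1-(-z)^{n/d})^d$, which equals $(1+z^{n/d})^d$ only when $n/d$ is odd (for $n=2$, $m=1$ the product is $(1-z)(1+z)=1-z^2$, not $1+z^2$). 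Second, and more seriously, the closed form you assert for the DFT, $(-1)^{k+\lfloor k/d\rfloor}\binom{n/d-1}{\lfloor k/d\rfloor}$ with $d=(m,n)$, has $d$ and $n/d$ interchanged; the correct value is $(-1)^{k+\lfloor kd/n\rfloor}\binom{d-1}{\lfloor kd/n\rfloor}$. Indeed, putting $w=-z$, one has $\prod_{j=1}^{n-1}(1+ze^{2\pi ijm/n})=\frac{(1-w^{n/d})^d}{1-w}=(1-w^{n/d})^{d-1}\bigl(1+w+\cdots+w^{n/d-1}\bigr)$, and writing $k=\frac{n}{d}s+r$ with $0\le r<\frac{n}{d}$ forces $s=\lfloor kd/n\rfloor$, so the coefficient of $z^k$ is $(-1)^{k}(-1)^{\lfloor kd/n\rfloor}\binom{d-1}{\lfloor kd/n\rfloor}$, with no parity case analysis needed. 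A quick check: for $n=4$, $m=1$, $k=1$ the product $(1+iz)(1-z)(1-iz)=(1+z^2)(1-z)$ has $z$-coefficient $-1$, matching $(-1)^{1+0}\binom{0}{0}=-1$, while your expression gives $(-1)^{1+1}\binom{3}{1}=3$.

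These slips do not cancel in your final step: if the summand attached to $m$ with $(m,n)=d$ really were $(-1)^{\lfloor k/d\rfloor}\binom{n/d-1}{\lfloor k/d\rfloor}$, then grouping by $d$ produces $c_{n/d}(b)$ and an honest reindexing $d\mapsto n/d$ would land on $\frac{(-1)^kk!}{n}\sum_{d\mid n}(-1)^{\lfloor kd/n\rfloor}\binom{d-1}{\lfloor kd/n\rfloor}c_d(b)$, which is not \eqref{main thm dist pos ai=1: for}; it is exactly the corrected coefficient above that, after the inverse DFT, grouping by $(m,n)=d$, and reindexing (verbatim as in the proof of Theorem~\ref{main thm dist ai=1}), yields \eqref{main thm dist pos ai=1: for}. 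Since you also explicitly defer the sign bookkeeping (``I would verify it against small numerical cases before trusting the sign''), and that bookkeeping is the entire content of the key step, the proposal as it stands has a genuine gap --- though one that is repairable along the lines indicated, giving an alternative to the paper's recursion-based proof. Your closing observation, that the sum genuinely runs over all $d\mid n$ because $\binom{n/d-1}{\lfloor k/d\rfloor}$ need not vanish when $d\nmid k$, is correct and worth keeping.
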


\begin{proof}
Clearly, $N_n^{>0}(k,b)=N_n(k,b)-N_n^{0}(k,b)$, where $N_n^{0}(k,b)$ denotes the number of solutions $\langle x_1,\ldots,x_k \rangle \in \Z_{n}^k$ with all $x_i$ distinct modulo $n$ and one of $x_i$ is zero modulo $n$. Also, clearly, $N_n^{0}(k,b)=kN_n^{>0}(k-1,b)$. Thus, 
\begin{align}\label{relation bet N and N>0}
N_n(k,b)=N_n^{>0}(k,b)+kN_n^{>0}(k-1,b).
\end{align}
Now, we use Theorem~\ref{main thm dist ai=1}. We have
\begin{align*}
N_n(k,b)&= \frac{(-1)^k k!}{n}\mathlarger{\sum}_{d\, \mid \, (n,\;k)}(-1)^{\frac{k}{d}}c_{d}(b)\binom{\frac{n}{d}}{\frac{k}{d}}\\
&= \frac{(-1)^k k!}{n}\mathlarger{\sum}_{d\, \mid \, (n,\;k)}(-1)^{\frac{k}{d}}c_{d}(b)\left(\binom{\frac{n}{d}-1}{\frac{k}{d}}+\binom{\frac{n}{d}-1}{\frac{k}{d}-1}\right)\\
&= \frac{(-1)^k k!}{n}\mathlarger{\sum}_{d\, \mid \, n}c_{d}(b)\left((-1)^{\frac{k}{d}}\binom{\frac{n}{d}-1}{\frac{k}{d}}-(-1)^{\frac{k}{d}-1}\binom{\frac{n}{d}-1}{\frac{k}{d}-1}\right)\\
&= \frac{(-1)^k k!}{n}\mathlarger{\sum}_{d\, \mid \, n}c_{d}(b)\left((-1)^{\lfloor\frac{k}{d}\rfloor}\binom{\frac{n}{d}-1}{\lfloor\frac{k}{d}\rfloor}-(-1)^{\lfloor\frac{k-1}{d}\rfloor}\binom{\frac{n}{d}-1}{\lfloor\frac{k-1}{d}\rfloor}\right)\\
&= \frac{(-1)^k k!}{n}\mathlarger{\sum}_{d\, \mid \, n}(-1)^{\lfloor\frac{k}{d}\rfloor}c_{d}(b)\binom{\frac{n}{d}-1}{\lfloor\frac{k}{d}\rfloor}\\
&+k\frac{(-1)^{k-1} (k-1)!}{n}\mathlarger{\sum}_{d\, \mid \, n}(-1)^{\lfloor\frac{k-1}{d}\rfloor}c_{d}(b)\binom{\frac{n}{d}-1}{\lfloor\frac{k-1}{d}\rfloor}.
\end{align*}
Note that in the fourth equality above we have used the fact that $\lfloor\frac{k}{d}\rfloor=\lfloor\frac{k-1}{d}\rfloor+1$ if $d \mid k$, and $\lfloor\frac{k}{d}\rfloor=\lfloor\frac{k-1}{d}\rfloor$ if $d\nmid k$. Now, recalling (\ref{relation bet N and N>0}) we obtain the desired result.
\end{proof}

We believe that Theorem~\ref{main thm dist pos ai=1} is also a strong tool and might lead to interesting applications. Denote by $VT_b^{1,k}(n)$ the set of codewords in the Varshamov--Tenengolts code $VT_b(n)$ with Hamming weight $k$. Theorem~\ref{main thm dist pos ai=1} immediately gives an explicit formula for the number of such codewords. This result is useful in the study of a class of binary codes that are immune to single repetitions \cite{DOAN}.  

\begin{corollary}\label{VT exa k1s}
The number $|VT_b^{1,k}(n)|$ of codewords in the Varshamov--Tenengolts code $VT_b(n)$ with Hamming weight $k$ is
\begin{align} \label{VT exa k1s: for}
|VT_b^{1,k}(n)|=\frac{(-1)^k}{n+1}\mathlarger{\sum}_{d\, \mid \, n+1}(-1)^{\lfloor\frac{k}{d}\rfloor}c_{d}(b)\binom{\frac{n+1}{d}-1}{\lfloor\frac{k}{d}\rfloor}.
\end{align}
\end{corollary}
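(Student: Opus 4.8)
The plan is to mimic exactly the argument used for Corollary~\ref{VT exa tot}, but replacing the ``all subsets'' count $T_n(b)$ by the ``$k$-element subsets'' count that is packaged in Theorem~\ref{main thm dist pos ai=1}. First I would observe that a codeword $\langle y_1,\ldots,y_n\rangle\in VT_b(n)$ of Hamming weight $k$ is completely determined by the set $S=\{\,i : y_i=1\,\}\subseteq\{1,2,\ldots,n\}$, which has $|S|=k$, and that the defining congruence $\sum_{i=1}^{n}iy_i\equiv b\pmod{n+1}$ says precisely $\sum_{i\in S}i\equiv b\pmod{n+1}$. Hence $|VT_b^{1,k}(n)|$ is exactly the number of $k$-element subsets of $\{1,2,\ldots,n\}$ whose elements sum to $b$ modulo $n+1$.

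Next I would pass from modulus $n$ to modulus $n+1$. Since $\{1,2,\ldots,n\}$ is precisely the set of nonzero residues modulo $n+1$, such a $k$-subset is the same thing as an \emph{unordered} choice of $k$ distinct \emph{positive} (i.e.\ nonzero) residues modulo $n+1$ that sum to $b$. Listing the $k$ chosen residues in all $k!$ possible orders, and using that they are pairwise distinct so that no order is repeated, one sees that the number of \emph{ordered} tuples $\langle x_1,\ldots,x_k\rangle\in\Z_{n+1}^{k}$ with all $x_i$ positive and distinct modulo $n+1$ and $x_1+\cdots+x_k\equiv b\pmod{n+1}$ equals $k!\,|VT_b^{1,k}(n)|$; that is, $|VT_b^{1,k}(n)|=\tfrac{1}{k!}\,N_{n+1}^{>0}(k,b)$.

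Finally I would invoke Theorem~\ref{main thm dist pos ai=1} with $n$ replaced by $n+1$, which gives
$$
N_{n+1}^{>0}(k,b)=\frac{(-1)^{k}k!}{n+1}\mathlarger{\sum}_{d\,\mid\,n+1}(-1)^{\lfloor\frac{k}{d}\rfloor}c_{d}(b)\binom{\frac{n+1}{d}-1}{\lfloor\frac{k}{d}\rfloor},
$$
and dividing both sides by $k!$ yields (\ref{VT exa k1s: for}).

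\textbf{Expected difficulty.} There is no substantial obstacle here: the corollary is a direct translation of Theorem~\ref{main thm dist pos ai=1} through a bijection. The only points that deserve a moment's care are (i) the identification of a weight-$k$ codeword with its support set, (ii) the observation that ``positive modulo $n+1$'' is literally the same as ``lying in $\{1,\ldots,n\}$'', and (iii) the bookkeeping of the $k!$ factor relating ordered tuples with distinct entries to subsets. Degenerate cases such as $k=0$, or $k>n$ (where the count is $0$), are handled automatically by the convention on binomial coefficients already in force in the excerpt.
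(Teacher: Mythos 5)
Your proposal is correct and follows essentially the same route as the paper: identify the positions of the $k$ ones in a codeword with $k$ distinct positive residues modulo $n+1$ summing to $b$, then apply Theorem~\ref{main thm dist pos ai=1} with modulus $n+1$ and divide by $k!$ to pass from ordered tuples to the unordered support set. Your explicit statement $|VT_b^{1,k}(n)|=\tfrac{1}{k!}N_{n+1}^{>0}(k,b)$ just makes precise the paper's phrase ``disregarding the order of the coordinates.''
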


\begin{proof}
Let $\langle y_1,\ldots,y_n \rangle$ be a codeword in $VT_b(n)$ with Hamming weight $k$, that is, with exactly $k$ $1$'s. Denote by $x_j$ the position of the $j$th one. Note that $1\leq j \leq k$ and $1 \leq x_1 < x_2 < \cdots < x_k \leq n$. Now, we have 
$$
\sum_{i=1}^{n}iy_i \equiv b \pmod{n+1} \Longleftrightarrow x_1+\cdots +x_k\equiv b \pmod{n+1}.
$$
Therefore, finding the number of codewords in $VT_b(n)$ with Hamming weight $k$ boils down to finding the number of solutions $\langle x_1,\ldots,x_k \rangle \in \Z_{n+1}^k$ of the linear congruence $x_1+\cdots +x_k\equiv b \pmod{n+1}$, with all $x_j$ positive and distinct modulo $n+1$, and with disregarding the order of the coordinates. The result now follows by a direct application of Theorem~\ref{main thm dist pos ai=1}.
\end{proof}

\begin{rema}
There is an earlier interesting result of Dolecek and Anantharam \cite{DOAN} which gives the formula (\ref{VT exa k1s: for}) in a special case where the Hamming weight is dependent on the modulus, but here we give a more general treatment where the Hamming weight is \textit{arbitrary}. Of course, the expression (3.7) in their paper is exactly the same as our formula (\ref{main thm dist ai=1: for}), so it is an interesting problem to prove a 1-1 correspondence between these two results.
\end{rema}

\section{More connections}\label{Sec_4}

Interestingly, some special cases of the functions $P_n(k,b)$, $N_n(k,b)$, $T_n(b)$, and $T'_n(b)$ that we studied in this paper have appeared in a wide range of combinatorial problems, sometimes in seemingly unrelated contexts. Here we briefly mention some of these connections. It would be interesting to prove 1-1 correspondences between these interpretations. 

\bigskip

\textbf{Ordered partitions acted upon by cyclic permutations.} Consider the set of all ordered partitions of a positive integer $n$ into $k$ parts acted upon by the cyclic permutation $(1 2 \ldots k)$. Razen, Seberry, and Wehrhahn \cite{RSW} obtained explicit formulas for the cardinality of the resulting family of orbits and for the number of orbits in this family having exactly $k$ elements. These formulas coincide with the expressions for $P_n(k,0)$ and $P_n(k,1)$, respectively, when $n$ or $k$ is odd (see Corollary~\ref{special cases: b=0,1}). Razen et al. \cite{RSW} also discussed an application in coding theory in finding the complete weight enumerator of a code generated by a circulant matrix.

\bigskip

\textbf{Permutations with given cycle structure and descent set.} Gessel and Reutenauer \cite{GERE} counted permutations in the symmetric group $S_n$ with a given cycle structure and descent set. One of their results gives an explicit formula for the number of $n$-cycles with descent set $\lbrace k \rbrace$, which coincides with the expression for $P_n(k,1)$ when $n$ or $k$ is odd.

\bigskip

\textbf{Fixed-density necklaces and Lyndon words.} If $n$ or $k$ is odd then the expressions for $P_n(k,0)$ and $P_n(k,1)$ give, respectively, the number of fixed-density binary necklaces and fixed-density binary Lyndon words of length $n$ and density $k$, as described by Gilbert and Riordan \cite{GIRI}, and Ruskey and Sawada \cite{RUSA}.

\bigskip

\textbf{Necklace polynomial.} The function $T_n(b)$ is closely related to the polynomial

$$
M(q, n)= \frac{1}{n}\sum_{d\, \mid \, n}\mu(d)q^{\frac{n}{d}},
$$
which is called the \textit{necklace polynomial} of degree $n$ (it is easy to see that $M(q, n)$ is integer-valued for all $q \in \Z$). In fact, if $n$ is odd then $M(2, n)=T_n(1)$. The necklace polynomials turn up in various contexts in combinatorics and algebra.

\bigskip

\textbf{Quasi-necklace polynomial.} The function $T'_n(b)$ is also closely related to the polynomial

$$
M'(q, n)= \frac{1}{2n}\sum_{d\, \mid \, n}\mu(d)q^{\frac{n}{d}},
$$
that we call the \textit{quasi-necklace polynomial} of degree $n$. In fact, if $n$ is odd then $M'(2, n)=T'_n(1)$. The quasi-necklace polynomials also turn up in various contexts in combinatorics. For example, they appear as:

\begin{itemize}

   \item the number of transitive unimodal cyclic permutations obtained by Weiss and Rogers \cite{WERO} (motivated by problems related to the structure of the set of periodic orbits of one-dimensional dynamical systems) using methods related to the work of Milnor and Thurston \cite{MITH}. See also \cite{THIB} which gives a generating function for the number of unimodal permutations with a given cycle structure;
   
   \item the number of periodic patterns of the tent map \cite{AREL}.
   
\end{itemize}

\section*{Acknowledgements}
The authors would like to thank the anonymous referees for a careful reading of the paper and helpful suggestions. During the preparation of this work the first author was supported by a Fellowship from the University of Victoria (UVic Fellowship).

\end{document}